\def\RR{{\mathbf R}}
\def\cA{{\mathcal A}}
\def\cC{{\mathcal C}}
\def\cE{{\mathcal E}}
\def\cL{{\mathcal L}}
\def\cM{{\mathcal M}}
\def\cO{{\mathcal O}}
\def\cR{{\mathcal R}}
\def\cV{{\mathcal V}}
\def\cX{{\mathcal X}}
\def\cY{{\mathcal Y}}
\mathchardef\alphag="7C0B \mathchardef\betag="7C0C
\mathchardef\gammag="7C0D \mathchardef\deltag="7C0E
\mathchardef\varepsilong="7C22 \mathchardef\varphig="7C27
\mathchardef\psig="7C20 \mathchardef\zetag="7C10
\mathchardef\epsilong="7C0F \mathchardef\rhog="7C1A
\mathchardef\taug="7C1C \mathchardef\upsilong="7C1D
\mathchardef\iotag="7C13 \mathchardef\thetag="7C12
\mathchardef\pig="7C19 \mathchardef\sigmag="7C1B
\mathchardef\etag="7C11 \mathchardef\omegag="7C21
\mathchardef\kappag="7C14 \mathchardef\lambdag="7C15
\mathchardef\mug="7C16 \mathchardef\xig="7C18
\mathchardef\chig="7C1F \mathchardef\nug="7C17
\mathchardef\varthetag="7C23 \mathchardef\varpig="7C24
\mathchardef\varrhog="7C25 \mathchardef\varsigmag="7C26
\mathchardef\Omegag="7C0A \mathchardef\Thetag="7C02
\mathchardef\Sigmag="7C06 \mathchardef\Deltag="7C01
\mathchardef\Phig="7C08 \mathchardef\Gammag="7C00
\mathchardef\Psig="7C09 \mathchardef\Lambdag="7C03
\mathchardef\Xig="7C04 \mathchardef\Pig="7C05
\mathchardef\Upsilong="7C07
\newtheorem{theorem}[subsection]{Theorem}
\newtheorem{lem}[subsection]{Lemma}
\newtheorem{prop}[subsection]{Proposition}
\theoremstyle{definition}
\newtheorem{definition}[subsection]{Definition}
\newtheorem{def-prop}[subsubsection]{Definition-Proposition}
\theoremstyle{remark}
\theoremstyle{plain}
\numberwithin{equation}{subsection}
\def\boxit#1#2{\setbox1=\hbox{\kern#1{#2}\kern#1}%
\dimen1=\ht1 \advance\dimen1 by #1 \dimen2=\dp1
\advance\dimen2 by #1
\setbox1=\hbox{\vrule height\dimen1 depth\dimen2\box1\vrule}%
\setbox1=\vbox{\hrule\box1\hrule}%
\advance\dimen1 by .4pt \ht1=\dimen1 \advance\dimen2 by
..4pt \dp1=\dimen2 \box1\relax}
\def\RR{{\mathbf R}}
\def\cA{{\mathcal A}}
\def\cC{{\mathcal C}}
\def\cE{{\mathcal E}}
\def\cL{{\mathcal L}}
\def\cM{{\mathcal M}}
\def\cO{{\mathcal O}}
\def\cR{{\mathcal R}}
\def\cV{{\mathcal V}}
\def\cX{{\mathcal X}}
\def\cY{{\mathcal Y}}
\mathchardef\alphag="7C0B \mathchardef\betag="7C0C
\mathchardef\gammag="7C0D \mathchardef\deltag="7C0E
\mathchardef\varepsilong="7C22 \mathchardef\varphig="7C27
\mathchardef\psig="7C20 \mathchardef\zetag="7C10
\mathchardef\epsilong="7C0F \mathchardef\rhog="7C1A
\mathchardef\taug="7C1C \mathchardef\upsilong="7C1D
\mathchardef\iotag="7C13 \mathchardef\thetag="7C12
\mathchardef\pig="7C19 \mathchardef\sigmag="7C1B
\mathchardef\etag="7C11 \mathchardef\omegag="7C21
\mathchardef\kappag="7C14 \mathchardef\lambdag="7C15
\mathchardef\mug="7C16 \mathchardef\xig="7C18
\mathchardef\chig="7C1F \mathchardef\nug="7C17
\mathchardef\varthetag="7C23 \mathchardef\varpig="7C24
\mathchardef\varrhog="7C25 \mathchardef\varsigmag="7C26
\mathchardef\Omegag="7C0A \mathchardef\Thetag="7C02
\mathchardef\Sigmag="7C06 \mathchardef\Deltag="7C01
\mathchardef\Phig="7C08 \mathchardef\Gammag="7C00
\mathchardef\Psig="7C09 \mathchardef\Lambdag="7C03
\mathchardef\Xig="7C04 \mathchardef\Pig="7C05
\mathchardef\Upsilong="7C07
\begin{document}

\title{A general theorem on temporal foliation of causal sets}

\author{Ali Bleybel \& Abdallah Zaiour}

\address{Faculty of Sciences (I), Lebanese University, Beirut, Lebanon}

\keywords{Causality, Causal sets,  Poset}




%
\begin{abstract}
 Causal sets (or causets) are a particular class of partially ordered sets, which are proposed as basic models
 of discrete space-time, specially in the field of quantum gravity. In this context, we show the
 existence of temporal foliations for any causal set, or more generally, for a causal space. \\
 Moreover, we show that (order-preserving) automorphisms of a large class of infinite causal sets fall into two classes

1)  Automorphisms of spacelike hypersurfaces in some given foliation (i.e. spacelike automorphisms),
or

2) Translation in time.

  More generally, we show that for any automorphism $\Phi$ of a generic causal set $\cC$, there exists a partition of $\cC$ into finitely many subcausets, on each of which (1) or (2) above hold. These subcausets can be assumed connected if, in addition, there are enough distinct orbits under $\Phi$.
\end{abstract}
%
\maketitle

\section*{Introduction}

  One of the most important questions concerning the structure of Lorentzian manifolds concern the splitting of the manifold into a
 family of spacelike slices (or hypersurfaces), provided some assumptions are made regarding the causal structure of the manifold.
Such questions have led, e.g.,  to the establishment of  Geroch's Theorem ~\cite{[G]}. \\
Recall that a spacetime $\cM$ is said to be {\it globally hyperbolic} if and only if it is causal (which means that $\cM$ does not admit closed timelike curves (CTC)) and for every $p, q \in \cM$, the set $J^+(p) \cap J^-(q)$ is compact. Here $J^+(p)$ (respectively $J^-(q)$) is the causal future of $p$ (respectively the causal past of $q$), i.e the points of $\cM$ that can be reached from $p$ (respectively from $q$) by a future directed (resp. by a past directed) causal curve, i.e. nowhere spacelike curve. \\
A {\it Cauchy surface} is a spacelike hypersurface that is crossed exactly once by every  inextendible causal curve. \\
Geroch's Theorem states that a spacetime $\cM$ is globally hyperbolic if and only if it possesses a Cauchy surface.   \\
The existence of one Cauchy surface implies that spacetime can be foliated by spacelike hypersurfaces, i.e. Cauchy surfaces. In particular, the spacetime admits a partition into Cauchy surfaces, and one deduces the existence of a time function, i.e. a map $T: \cM \to \mathbb{R}$, whose fibers are exactly the slices (or the hypersurfaces) of the foliation. \\
An important extension of Geroch's Theorem addresses the following question: Given a Cauchy surface $\Sigma$ of $\cM$, can we find a foliation of $\cM$ such that $\Sigma$ is {\it among} the slices of the foliation?  \\
This has been answered (among other interesting questions)  in the positive by Bernal and S\'anchez (see  ~\cite{[B2]} and ~\cite{[B3]}).


  While the analog of Geroch's Theorem for causal sets might be relatively easy to prove, we found that it might be a precursor to a useful approach for a second proof in the continuum case, and this will be investigated later.
  \par
Recall that a causal set (or a causet for short) is defined is to be a locally finite partially ordered set. The notion of a causal set (in the context of spacetime, or gravitational physics) was proposed, among others, by Rafa\"el Sorkin et al. (see, e.g. ~\cite{[B]}), as an alternative to the concept of spacetime continuum. \\
  However, discrete approaches to the spacetime geometry come in multiple flavors, like tensor or spin networks, causal dynamical triangulations, causal sites, or others. What distinguishes the causal set program is that elements of a causal set are devoid of any internal structure. However, the approach of this paper applies to all these models, and might even be of some interest in considerations related to CDT.
 \par
  In this paper we show more generally that for any partially ordered set ({\it Poset}), a similar notion of temporal foliation can be defined, and that furthermore, there always exists (assuming the axiom of choice, or equivalently Zorn's Lemma) a "temporal foliation" of any poset. More importantly, given any antichain in the poset, we show the existence of a compatible foliation, i.e.  the antichain is a subset of a slice of the foliation.

\par

 In the subsequent section we consider the subclass of "Well behaved causal sets", i.e. causal sets satisfying some additional axioms, besides local finiteness. For this subclass, we were able to prove a general theorem that characterizes the causal automorphisms of such causal sets. By a {\it causal automorphism} of a causet $\cC$ we mean an order preserving bijection $\Phi: \cC \to \cC$.

Such a result might be of independent interest, despite the possibility that, in a generic causal set, there might be no nontrivial causal automorphisms at all, i.e. causal automorphisms other than the identity (however, work is underway to find conditions, which, even in some probabilistic sense, might allow to obtain random causets whose automorphism group is non-trivial).   \\
Although the impact of such results can be restricted due the above mentioned reasons, it would be interesting to investigate whether we will be able to use the notion of "causal automorphism" as an analogue to the symmetry transformations in Minkowski spacetime, and then asking whether an appropriate version of Noether's Theorem might hold in this context, as proposed in ~\cite{[D]}.

In a forthcoming paper, we will describe a procedure (that might be along the lines of ~\cite{[M]}) that allow the passage to the case of continuous spacetimes, and we expect that appropriate versions of some or all of the results presented in this paper will continue to hold in that context.
\par
  The paper is organized as follows: in section one we set our
  notations and lay down the formulation of the
  problem. In sections two and three we state the main
  results and their proofs, then we lay down our conclusions.

{\it Acknowledgement}: We wish to thank Aron Wall for attracting our attention to some issues with a previous version of the manuscript.

\section{Causal spaces}
 In the next paragraph we recall some well known notions of causality in general continuous spacetimes.

\subsection{Causal relations in the continuum}
Let $(\cM$, g) be a spacetime: $\cM$ is $n+1$-differentiable manifold ($n \geq 1$), equipped with a metric $g$ of signature $(-, + \dots, +)$. \\
  The {\it chronological past} of an event $p$ in $\cM$, denoted by $I^{\pm}(p)$, is the set of all events $q \in \cM$ such that $p$ can be reached from $q$ using a timelike curve. Similarly, the chronological future of $p$, $I^+(p)$ is the set of events $q$ which can be reached from $p$ using a timelike curve. The causal past/future of $p$ $J^{\pm}(p)$ is defined analogously by replacing, in the above definition, 'timelike' by 'causal' (where a causal curve is a curve whose tangent vector is nowhere spacelike).   \\
  The {\it Alexandrov interval} is the open set defined as $I(p,q) \equiv I^+(p) \cap I^-(q)$. A {\it globally hyperbolic} spacetime $(\cM, g)$ is such that $\overline{I(p,q)}$ is compact for all $p, q \in \cM$. \\
  We denote the relation $q \in I^-(p)$ by $q \ll p$. Similarly, $q \prec p$ is a shorthand for $ q \in J^-(p)$.

\subsection{Causal spaces}

In the following, we call {\it causal space} $(\cM,\prec)$ a nonempty set $\cM$ of {\it
events}
endowed with a causality relation, that is, a partial order on
$\cM$, denoted $\prec$. This order is a binary relation which is reflexive ($ x \prec x$ for all $x \in \cM$), transitive ($x \prec y \, \& \, y \prec z$ implies $x \prec z$) and antisymmetric $(x \prec y \, \& \, y \prec x )$ implies  $x=y $. \\
 An important fact about this definition is that it excludes the possibility of (non-degenerate) CTC's (closed timelike curves) which would violate the antisymmetry
of $\prec$.  \\
The point of this definition is translate causality properties of a Lorentzian manifold into a discrete model of spacetime. In this context, if we start by some Lorentzian manifold, and then applied the 'sprinkling' procedure (recalled in section \ref{app}) to some region of this manifold, then two causally related points $x$ and $y$ (in the continuum sense described in the section above) become related by the order relation $\prec$. \\
Two elements $x$ and $y$ of a causal space are {\it incomparable} if neither $x \prec y$ nor $y \prec x$; this is abbreviated using the notation $x \, \| \, y$.    \\
Given two elements $x, y \in \cM$, we denote by $[x,y]$ (the interval having endpoints $x$ and $y$ respectively) the set
$$ [x , y] := \{ z \in \cM |\, x \prec z \prec y \} = \text{Fut}(x) \cap \text{Past}(y). $$
 Here we do not put restrictions on the cardinality of $\cM$.

 A different definition of causal spaces is given in literature, namely, a 'causal space' is defined as a set equipped of {\it two} partial orders ($\prec$ and $ \ll$) (causality and chronology), together with a third relation, denoted $\rightarrow$ and called 'horismos' . See ~\cite{[KP]} for a discussion of mathematical properties of these 'causal spaces'. We do not follow this route here, although it would be quite interesting to investigate issues of temporal foliations in this more involved context. \\
We stress that the results of this section hold for arbitrary posets, whether discrete (to be defined below), or not.   \\
A {\it discrete} causal space, also called a {\it causal set} is by definition a locally finite partially ordered set, that is, given any two elements $x, y$, the interval $[x,y]$ is finite.  

A causal space can be of any cardinality, including the cardinality of the continuum.   \\
Let $(\cM, \prec)$ be a causal space. A {\it subspace} of $\cM$ is a set $\cM' \subset \cM$ equipped with the induced partial order $\prec'$, i.e. for $x, y \in \cM'$, $x \prec' y$ iff $x \prec y$ (in $\cM$). Unless this will lead to a confusion, we will henceforth use the same notation for $\prec$ and $\prec'$.  \\
A subspace $\cM'$ of $\cM$ is called {\it convex} in $\cM$ if every interval in $\cM'$ is convex in $\cM$, i.e. for all $x, y \in \cM'$, the interval $[x,y] = \{ a \in \cM | x \prec a \prec y \}$ is contained in $\cM'$.

\section{Temporal foliation}

 Before giving the definition of {\it Temporal foliation}, let us recall some notation.   \\
   By a {\it chain} in a causal space (or a poset) $(\cM, \prec)$ we mean a subset $C$ linearly ordered by $\prec$. \\
 Similarly, by an {\it antichain} we mean a subset $A$ of $\cM$ such that any two distinct elements of $A$ are incomparable, i.e.,  $\forall x, y \in A, x \neq y \Rightarrow x \, \| \, y$.

 Given an element $x$ in a poset $\cM$ and a subset $X$ of $\cM$, $x$ is said to be {\it  incomparable} to $X$ if and only if for every element $y \in X$ we have $x \, \| \, y$, i.e. $x$ and $y$ are incomparable.
 $x$ and $X$ are said to be {\it comparable} otherwise, i.e. there exists at least one $y \in X$ such that $x$ and $y$ are comparable.

\begin{definition} \label{foliation_0}
  Let $\cM$ be a partially ordered set. A {\it foliation} $F$ of $\cM$ is a
  partition of $\cM$ into disjoint sets $X_i, i\in I$ (called spacelike slices),  where $I$ is a nonempty set, equipped with a total order $\leq$, such that: \\
   \begin{itemize}
    \item i) $\cM = \bigcup_{i \in I} X_i$, i.e. the $X_i$'s form a covering of $\cM$. \\
    \item ii) For every $i \in I$, $X_i$ is an antichain. \\
    \item iii) For all $i,j \in I$, $i < j$,  there exist $x \in X_i, y \in X_j$ such that $x \prec y$; and for no $z \in X_j, t \in X_i$ such that $z \prec t$.   \\
   Here $i<j$ means $i \leq j \; \text{and} \; i \neq j$.  \\
     \item A slice $X_i$ is called a {\it top slice} of $F$ if $i$ is the maximal element of $I$. If $i$ is the minimum element of $I$, then we say that $X_i$ is the {\it bottom slice} of $F$.
   A {\it temporal foliation} $F$ of $\cM$ is a foliation which satisfies furthermore the following requirement: \\
     \item iv) Let $X_i$ be a given spacelike slice, and let $C$ be any inextendible chain in $\cM$. Assume $i$ is not a minimal nor a maximal element of $I$.  If $C \cap X_i = \varnothing$ then there exist $x_0, y_0 \in C$ such that $x_0 \prec z$ and $t \prec y_0$ for some $z,t \in X_i$.   \\
   If $X_i$ is a slice of $F$, such that furthermore $i$ is a minimal or maximal element of $I$, then any inextendible chain $\gamma$ of $\cM$ intersects  $X_i$. \\
   The slices of a temporal foliations are called {\it Cauchy slices}. In particular, any impermeable slice is termed Cauchy slice. \\
   \item A {\it partial foliation} $F_p$ (resp. {\it partial temporal foliation}) is a foliation (resp. {\it temporal foliation}) of some subspace $\cM(F_p) \subset \cM$. Here $\cM(F_p)$ is the subspace of $\cM$ consisting of elements contained in slices of $F_p$.
   \end{itemize}
 \end{definition}

  Condition (ii) expresses that each $X_i$ is an antichain.
  The last condition says that each $X_i$ precedes all $X_j$ for
  all $i \leq j, i \neq j $.
  In the case of Minkowski spacetime, one might take $I= \RR$,
   and the $X_i$ are then isomorphic to $\RR^3$.
  If $\cM$ is a globally hyperbolic Lorentzian manifold, then it can be shown that a temporal foliation by Cauchy surfaces satisfies all the above axioms (i)-(iv). In this case it also follows that $\cM$ is homeomorphic to $I \times \Sigma$, where $\Sigma$ is a Cauchy surface.

\begin{figure}[!h]
\begin{center}
\includegraphics[scale=3.,height=10.cm,width=10.cm]{./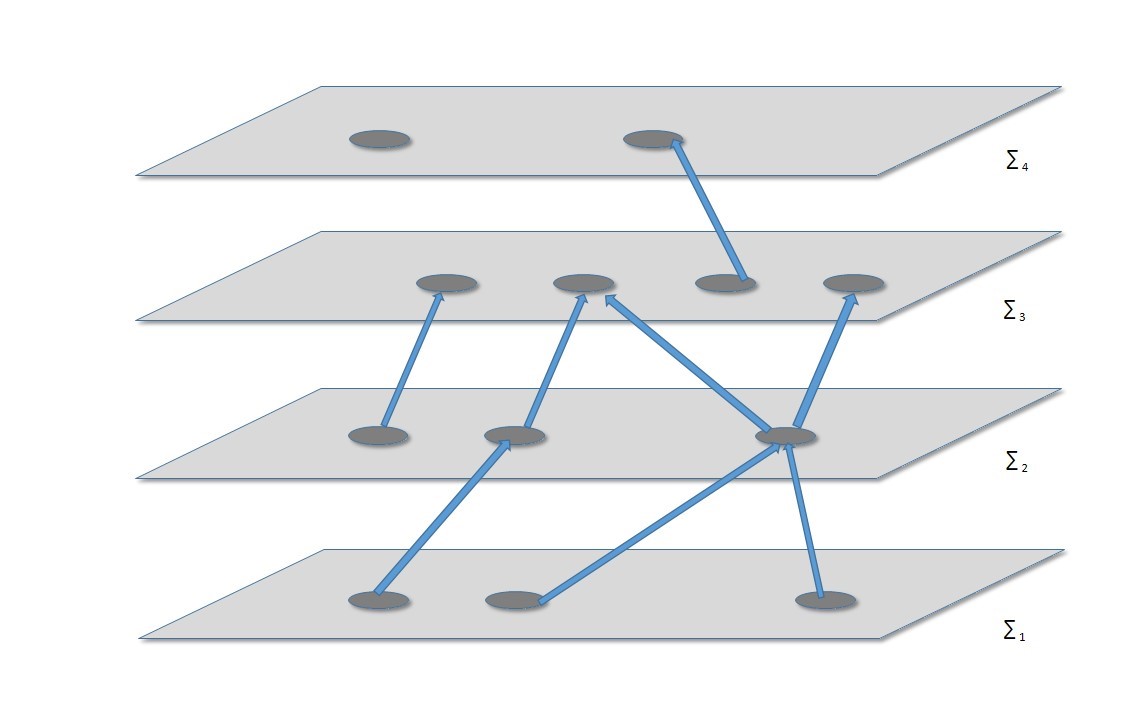}
\caption{A foliation of a causal set.}
\end{center}
\end{figure}

 One might observe that the {\it slices} of a foliation in our sense do not correspond exactly to the usual notion of a {\it Cauchy hypersurface} in continuous spacetime. This is true especially in the case of a temporal foliation of a causal set. A random 'event' in a causal set $\cC$ might be incomparable to all events in a  'spacelike slice' $X_i$ in some chosen foliation. Even if we extend $X_i$ to a maximal antichain $A$, $A$ does still not qualify as a genuine equivalent to a Cauchy hypersurface, as explained in [MRS]. This can happen if for some $x, y \in \cC$, $x \prec y$, $x$ and $y$ lie in the past and future of $A$ respectively, but there exists no $z \in A$ such that $ x \prec z \prec y$. In particular, the 'causal curve' joining $x$ to $y$ does not cross $A$! \\
 This issue can be addressed by the introduction of the so called {\it relation space} associated to a poset (or more generally a {\it multidirected set}) as in ~\cite{[D2]}. In Theorem \ref{impermeable} we show that in relation space it is always possible to find a temporal foliation consisting of impermeable slices. Note that we do not need, nor can we impose, that the slices be maximal antichains. \\
  Another issue with the (proposed) analogy 'maximal antichain $\leftrightarrow$ Cauchy hypersurfaces' concerns the possibility of chains not intersecting, neither permitting the antichain. Indeed, the continuum analog for such an antichain would be an 'acausal hypersurface' confined to the future of some spacetime event. An illustration is given in example \ref{aron}. This issue is taken care of by the introduction of the additional axiom (iv) for temporal foliation in \ref{foliation_0}.

In the following we consider different kinds of partial orders, and not only partial orders reflecting a causal relation. \\
A subset $Y$ of a poset $(P,\leq)$ is bounded from above if there exists an element $a \in P$ such that for all $x \in Y$ we have $x \leq a$.
In this case $a$ will be called an upper bound of $Y$.

A maximal element of $Y$ (as above) is an element $b\in Y$ such that for every $x \in Y$ if $b \leq x$ then $x=b$.

We recall the following well known result:
\begin{theorem} \label{Zorn}
 ({\it Zorn's Lemma})
Let $X$ be any partially ordered set. Assume that every chain has an upper bound. Then $X$ has a maximal element.
\end{theorem}
This result is known to be equivalent to the axiom of choice, which we assume.

  Let us state the first main result of the paper:
  \begin{theorem} \label{foliation}
   Let $(\cM, \prec)$ be a causal space, and let $\Sigma_0$ be an antichain in $\cM$. 
    Then there exists a foliation $F$ of $\cM$ such that \\
   $(*)$ There exists a slice $\Sigma \in F$ such that  $\Sigma_0 \subset \Sigma$.
   If, furthermore, there exists an antichain $\Sigma \supseteq \Sigma_0$ which is a Cauchy slice, then the same conclusion holds, with the additional requirement that $F$ is a temporal foliation.
  \end{theorem}
  \begin{proof}
   We consider each case separately

  \begin{itemize}

   \item \textbf{$\Sigma_0$ is an antichain:}

  \end{itemize}

   The proof strategy is to construct a partial order on the set $\cX$ of all partial foliations $F_p$ of $\cM$, which furthermore satisfy the following condition $(*)$ i.e. partial foliations for which $\Sigma_0$ is contained in one of the slices.
  Next one has to show that $\cX$ satisfy the premises of Zorn's Lemma in order to show the existence of a maximal partial foliation. Finally, we show that a maximal partial foliation is necessarily a 'total' foliation. \\
   We define a partial order on the set $\cX$ as follows:
 
  Given two partial foliations $F_1$ and $F_2$ we say that $F_1 \sqsubseteq F_2$ if every spacelike slice from $F_1$ is contained in one spacelike slice from $F_2$, more precisely, we have: \\
  $(\dagger)$  $F_1 \sqsubseteq F_2$ iff for every slice $X \in F_1$, there exists one slice $Y \in F_2$ such that $ X \subset Y$.   \\
   Let us show that $\sqsubseteq$ is an order relation on $\cX$: \\
   1. Reflexivity: Clear. \\
   2. Antisymmetry: Let $F_1, F_2 \in \cX$ be two partial foliations for which $(*)$ holds. Let also $X$ be any slice from $F_1$; by $(\dagger)$ there exists a slice $Y \in F_2$ such that $X \subset Y$. Also, there exists a slice $X' \in F_1$, $Y \subset X'$. Hence $\varnothing \neq X \subset X'$, which means necessarily $X=X'$ since distinct elements of $F_1$ are disjoint by assumption. \\
   3. Transitivity: If $F_1 \sqsubseteq F_2$ and $F_2 \sqsubseteq F_3$, then clearly $F_1 \sqsubseteq F_3$ by the transitivity of set inclusion.
   \par
   Before proceeding, note that $\cX$ is non-empty since the partial foliation defined as $\{\Sigma_0\}$ (i.e. $\Sigma_0$ is the unique spacelike slice of the partial foliation) belongs to $\cX$.  \\
   Let $\cY$ be a totally ordered subset of $\cX$ (equipped with the order relation $\sqsubseteq$), and let $J$ be an index set for $\cY$, i.e. $J$ is a totally ordered by $\leq$, and $\cY$ is a family of partial foliations $(F_j)_{j \in J}$ such that for all $j, j' \in J$, if $j \leq j'$ then $F_j \sqsubseteq F_{j'}$. \\
    Observe that $\cY$ is bounded from above by $F_{\rm sup}$, where $F_{\rm sup}$ is some set of all antichains of $\cM$, $X \subset \cM$ which have the form
    $$X =  \bigcup_{j \in J} X_j, $$
      for $(X_j)_{j \in J}$ an increasing family (i.e. a chain for inclusion) of antichains, $X_j \in F_j$ for all  $j \in J$ and $X_j \subset X_{j'}$ for all $j \leq j', j, j' \in J$.

  {\bf Claim (\ddag)} The set $F_{\rm sup}$ is a partial foliation: \\
   {\bf Proof of Claim}
  (a) The elements of $F_{\rm sup}$ are antichains: let $x,y \in X$, for $X \in F_{\rm sup}$. Write $X =  \bigcup_{j \in J} X_j$ as above. Then both $x$ and $y$ lie in the same $X_j$ for some $j \in J$, and $ x \| y$ since $X_j$ is an antichain in $F_j$. \\
  (b) The elements of $F_{\rm sup}$ are disjoint: follows similarly to the above, by observing that whenever $X \cap Y \neq \varnothing$ for $X, Y \in F_{\rm sup}$, then for some  $j', j'' \in J$, $X_{j'}, Y_{j''}$ in $F_{j'}, F_{j''}$ respectively, $X_{j'} \cap Y_{j''} \neq \varnothing$. Let then $j$ be the greatest among $j', j''$; clearly $X_{j'} \subset X_j$ and $Y_{j''} \subset Y_j$, by assumption, with $X_j, Y_j, \in F_j$. Hence $X_j \cap Y_j \neq \varnothing$ and $X_j = Y_j$ since $F_j$ is a partial foliation. \\
  (c) The last condition can be checked similarly: Given $X, Y \in F_{\rm sup}$ and keeping the same notation as in (b) above, it is clear that for some $j \in J$, at least one element of $X_j$ precedes some element of $Y_j$ (or the other way round). Since these are subsets of $X$ and $Y$ respectively, the conclusion follows.
  \par
 Applying \ref{Zorn} we conclude that $\cX$ admits a maximal element $F_{\rm max}$. Let us show that $F_{\rm max}$ satisfies the requirements of the theorem. Actually axioms (ii) and (iii) hold by our hypothesis (since $F_{\rm max}$ is an element of $\cX$, and elements of $\cX$ satisfy axioms (ii) and (iii) of definition \ref{foliation_0} by hypothesis). It remains to show that $F_{\rm max}$ satisfies axiom (i).

Assuming the contrary, there exists some $x \in \cM$ that is not contained in any spacelike slice of the foliation $F_{\rm max}$.  \\
   Two cases are to be considered:
\par
Case 1:  $x$ is incomparable to all spacelike slices of the foliation $F_{\rm  max}$.  Then adding $x$ to any slice produces a partial foliation strictly larger than $F_{\rm max}$ (in the order relation $\sqsubseteq$) which contradicts our assumption that $F_{\rm max}$ is a maximal element of $\cX$.
\par
Case 2: $x$ is comparable to at least one spacelike slice of $F_{\rm max}$.
 Let $F_{\rm max -}$ be the set of spacelike slices from $F_{\rm max}$ which have predecessors to $x$, i.e. for each $X \in F_{\rm max -}$ there exists a $y \in X$ such that $y \prec x$. Also, let $F_{\rm max +}$ be the set of spacelike slices from $F_{\rm max}$ which have successors to $x$.
   If there exists any spacelike slice in $F_{\rm max}$ which lies strictly between slices in $F_{\rm max -}$ and $F_{\rm max +}$, then we add $x$ to any of them, thus producing a partial foliation strictly larger than $F_{\rm max}$. If there is no such slice, we may add the slice $\{x\}$ to $F_{\rm max}$, (such that, $\{x\}$ lies between the slices in $F_{\rm max -}$ and $F_{\rm max +}$ respectively)  also producing a partial foliation larger than $F_{\rm max}$. In all cases, we get a contradiction.
   \begin{itemize}
  \item \textbf{$\Sigma \supset \Sigma_0$ is a Cauchy antichain:}
   \end{itemize}
  Let us fix $\Sigma$ throughout. \\
  Here we consider instead the set $\cX'$ of all partial temporal foliations $F_p$ satisfying
  $$(\dagger) \text{ The intersection } \cM(F_p) \cap \Sigma \text{ is a sub-slice of } F_p, \text{ and } \cM(F_p) \text{ is convex in } \cM. $$
  Repeating exactly the same steps as in the previous part, we obtain also a maximal partial temporal foliation $F_{\rm max}$ in $\cX'$.  \\
  To show that $F_{\rm max}$ is a total temporal foliation, we assume the contrary (i.e. that there is some $x \in \cM  \setminus \cM(F_{\rm max})$)and reach a conclusion as in part(1) above; the only possible difficulty that could have arisen had we not replaced the assumption $(*)$ by $(\dagger)$ is that adding a slice of the form $\{x\}$, and this would contradict the clause (iv) of definition \ref{foliation_0}.

  \end{proof}
  \subsubsection{Dilworth's Theorem} 
  Before we move to the next (sub)section, let us note that Dilworth's Theorem ~\cite{[D1]}, which is mainly applicable to finite posets (or to posets of finite width (the maximal size of a maximal antichain)), provides a decomposition of a poset into disjoint antichains. However, in the case of a general infinite posets, there are counterexamples to Dilworth's Theorem. For instance, consider the following example due to Sierpi\'nski: let $(\mathbb{R}, \prec)$ be the real field equipped with the partial order $\prec$ defined as follows 
  $$ x \prec y \qquad \textrm{if and only if} \qquad x \leq y \, \& \, x R y, $$
  with $\leq$ and $R$ being respectively the usual order and any well ordering on $\mathbb{R}$. \\
  Then $(\mathbb{R}, \prec)$ satisfies neither Dilworth's Theorem nor its dual, for every chain or antichain in this poset is countable, while the poset itself has cardinality $2^{\aleph_0}$.   \\ 
  The above example shows that any proof of Dilworth's Theorem (or its dual) do not extend to the case of a general poset, and hence cannot be used to find an alternative derivation of Theorem \ref{foliation} above. 
 \subsection{Passage to the relation space}

  The problem of permeability of 'spacelike slices' as pointed out by [D] can be resolved satisfactorily when passing to the {\it relation space} of a 'multidirected set" (in the terminology of [D]), at the expense of considering only maximal antichains in relation space. \\
  Recall that a multidirected set is a set $X$ (assumed non-empty), such that given any couple of elements $(x,y) \in X^2$, there exists arrows $a_1, \dots, a_n$ (or none!) having source $x$ and target $y$. A partially ordered set $(X, \prec)$ is a special case of a multidirected set, since the order relation between two comparable elements $x \prec y$ is an arrow $a$, having source $x$ and target $y$. In this case $X$ is a {\it directed} set (or graph), since there is at most one arrow joining any two elements. \\
  In this section we show that, considering the relation space $\cR(\cM)$ corresponding to some causal set $\cM$, it is always possible to find a temporal foliation in the sense of definition \ref{foliation_0} such that, moreover, every slice of the foliation is impermeable.  \\
  Recall that a 'permeable' spacelike slice is an antichain $A$ such that for some $x, y \in \cM$, there exists $x_0, y_0 \in A, x_0 \neq y_0$, $x \prec x_0, y_0 \prec y$ and $x \prec y$, and for no $z \in A$, $x \prec z \prec y$. \\
  The relation space $\cR(\cM)$ is defined as the set of all arrows $a: s \to t $ with $s, t \in \cM, s \neq t$ satisfying $s \prec t$. In this case we denote $s, t$ respectively by $s(a) := s $ and $t(a):=t$ (i.e. {\it source} and {\it target}). \\
   The set $\cR(\cM)$ is equipped with the strict order relation $\lhd$ defined as $a_1 \lhd a_2$ whenever the source of $a_2$ is the target of $a_1$, i.e. $s(a_2) = t(a_1)$. Note that it is rather easy to show that the relation $\lhd$ is indeed a {\it strict} partial order, which mean that $\lhd$ is transitive and antisymmetric, irreflexive (i.e. for all $a \in \cR(\cM)$, $a \ntriangleleft a$).  This last property should not pose any problem for what follows. \\
   Recall that in ~\cite{[D2]} it was shown that:
   \begin{theorem}
   Maximal antichains in relation space are impermeable.  \qquad \qquad \qquad ($\#$).
   \end{theorem}
   Moreover, the relation space $\cR(\cM)$ is locally finite if $\cM$ is a locally finite poset. For more details around this topic, see ~\cite{[D2]}. \\
   Here we prove:
   \begin{theorem} \label{impermeable}
     Let $\cM$ be an arbitrary causal set, and let $\cR(\cM)$ its relation space. Let $\Sigma_0$ be any antichain in $\cR(\cM)$, 
    then there exists a temporal foliation of $\cR(\cM)$ into impermeable slices, one of which contains $\Sigma_0$ as subset.
   \end{theorem}
   \begin{proof}
      Define a (*)-partial foliation of $\cR(\cM)$ as a partial temporal foliation consisting of impermeable slices. \\
      The proof of Theorem \ref{foliation} can be adapted to the present context. As before, we are looking for a maximal element in the set $(\cX, \sqsubseteq)$  of all (*)-partial foliations of $\cR(\cM)$ equipped with the partial order $\lhd$. Two such (*)-partial foliations $F_1, F_2$ are related by $\sqsubseteq$ if every slice of $F_1$ is a subset of a slice of $F_2$.
      Let $\Sigma'_0$ be an antichain extending $\Sigma_0$, such that $\Sigma'_0$ is impermeable. \\

      Let us outline the main steps, as well as the changes made: \\
      1. $\cX$ is non-empty, since the partial foliation $F_0$ containing the only slice $\Sigma'_0$ is in $\cX$.   \\
      2. Given any chain $\cY$ (for $\sqsubseteq$) of (*)-partial foliations, then $\cY$ is bounded above by some (*)-partial foliation $F_{\rm sup}$ defined as the set of all slices $X$ of the form $X= \bigcup_{j \in J} X_j$, with $X_j \in F_j \in \cY$ such that $X_j \subset X_{j'}$ for all $j \leq j', j,j' \in J$. As in \ref{foliation}, $F_{\rm sup}$ is shown to be a partial foliation. To show that all slices of $F_{\rm sup}$ are impermeable, let us proceed by contradiction: let $\gamma$ be an inextendible chain permeating a slice $X$ of $F_{\rm sup}$. There exists $a_1, a_2 \in \gamma$, $a_1, a_2$ are such that $a_1 \lhd x, y \lhd a_2$, for some distinct $x, y \in X$. The elements $a_1, a_2, x, y$ all belong to slices of some foliation $F_{j_0} \in \cY$ by assumption, hence $x, y \in X_{j_0} \in F_{j_0}$ for some impermeable slice $X_{j_0}$, contradiction. \\
      3. Let $F_{\rm max}$ be a maximal (*)-partial foliation. To show that $F_{\rm max}$ is a total foliation, we may proceed exactly as in the proof of \ref{foliation}. \\
      Suppose towards a contradiction there exists some $a \in \cR(\cM)$ which is not contained in any slice of $F_{\rm max}$. It is then possible as observed above (\ref{foliation}) either to add $a$ to some slice of $F_{\rm max}$ or to add a new slice $\{a\}$ to $F_{\rm max}$, thus producing a partial foliation $F'_{\rm max}$ strictly larger than $F_{\rm max}$.  \\
      It remains to show that no slice in the new partial foliation is permeable. Observe that $F_{\rm max}$ is a foliation of a subspace $S \subset \cR(\cM)$ consisting of impermeable slices (for all chains contained in $S$). The partial foliation $F'_{\rm max}$ is a foliation of $S':=S \cup \{a\}$, in which some slice might be permeable for some inextendible chain in $\gamma \subset S'$. \\
      Let $\Sigma' \in F'_{\rm max}$ and suppose $\gamma$ permeates $\Sigma'$. Then for some $a_1, a_2 \in \gamma$ we have $a_1 \prec x, y \prec a_2$, for some distinct $x, y \in \Sigma'$. Let $\gamma_0:= \gamma \cap S$, and $\Sigma := \Sigma' \cap S$. By assumption, $F_{\rm max} \sqsubseteq F'_{\rm max}$, and then $\Sigma'= \Sigma$ or $\Sigma' = \Sigma \cup \{a\}$.  \\ 
       1. If $\Sigma'=\Sigma$, then necessarily $\gamma =\gamma_0 \cup \{a\}$ and $a$ is either to the past or to the future of $\Sigma$.
         Assume the later. If $a$ is not a maximal element of $\gamma$, then $\gamma_0$ must already intersect $\Sigma$, $F_{\rm max}$ being a (*)-partial foliation. So assume $a$ is a maximal element of $\gamma$. \\
          If furthermore, there is no slice in $F_{\rm max}$ succeeding $\Sigma$, then by (iv) of \ref{foliation_0} $\gamma_0$ intersects  $\Sigma$, so $\Sigma$ is not permeable for $\gamma$, contradiction. \\
         Otherwise, $\Sigma$ is not a top slice of $F_{\rm max}$. It follows also in this case that $\gamma_0$ intersects $\Sigma$ by the assumption that $F_{\rm max}$ is a (*)-partial foliation, again reaching a contradiction. \\
         Now assume that for some slice $\Sigma_1$ in $F_{\rm max}$, $a$ is added in order to get $F'_{\rm max}$. Note that $F_{\rm max}$ is a {\it temporal foliation} of $S$ by assumption, so in particular it satisfies clause (iv) of \ref{foliation_0}, hence $\gamma_0$ intersects $\Sigma_1$. Let $c$ be the point of intersection. Also, $\gamma_0$ intersects $\Sigma$ since $F_{\rm max}$ is a (*)-partial foliation, again reaching a contradiction. \\
       2. $\Sigma'= \Sigma \cup \{a\}$. If $\gamma= \gamma_0 \cup \{a\}$ we have a contradiction.  Otherwise, by the impermeability of $\Sigma$ to $\gamma_0$, it is clear that we also get a contradiction with the assumption made on $\Sigma'$ and $\gamma$.
   \end{proof}
  The solution proposed in this section to the problem of permeability of slices of a foliation is mainly of interest to the applications to quantum gravity, as stressed in ~\cite{[D2]}.
\section{Automorphisms of infinite causal sets}
\subsection{Some assumptions} \label{assump}
 We would like to make some restricting assumptions concerning the causal sets which we will study. Observe that (see below) realistic models of (discretized) spacetime do not obey all these assumptions (notably assumption (b) below).

Consider a non empty poset (or equivalently a causal space) $\cM$, equipped with a partial order (or equivalently a causal relation) $\prec$. \\
For any $x \in \cM$, we denote by Past$(x)$ the set Past$(x):= \{ y \in \cM | \; y \prec x\}$.
Similarly by Fut$(x)$ we denote the following set Fut$(x):= \{ y \in \cM | \; x \prec y\}$.

 Recall that a causal set  $\cC$ is a locally finite partially ordered set (Poset). By local finiteness we mean that for any $x, y \in \cC$, $x \prec y$ we have
$$\qquad  \text{Past$(y) \cap$ Fut$(x)$ is finite}. $$
Let $\cC$ be an infinite countable causal set (i.e. Card$(\cC) = \aleph_0$). \\
We say that $\cC$ is a {\it special} causal set if and only if: \\
(a) The (undirected!) graph underlying the causal set $\cC$ is connected. \\
(b) For any $x \in \cC$ and any antichain $A$, we have both:  \\
 Card(Past$(x) \cap A$)$< \infty$ and Card(Fut$(x) \cap A$)$<\infty$. \\
(c) $\cC$ contains an infinite antichain.

A chain $C \subset \cC$ is said to be inextendible if $C$ is maximal with respect to the order induced by inclusions among chains of $\cC$. \\
A variant of assumption (b) above can also be considered, namely: \\
(b') Given an inextendible chain $C \subset \cC$, then, for every antichain $A \subset \cC$ we have:
$$ A \subset \bigcup_{x \in C} {\rm Past}(x) \cup {\rm Fut}(x).$$
 \subsubsection{Example} \label{aron}
  While most discrete model spacetimes do not satisfy assumptions (a)-(c) above, there is a class of potentially interesting spacetimes in which these assumptions hold.
   Consider the conformally flat FRW metric (in 1+1 dimension):
  $$ ds^2 =  a^2(\eta)(-d\eta^2 + dx^2), $$
  and assume that the scale factor $a(\eta)$ never vanishes, and furthermore $a(\eta) \to 0^+$ when $\eta \to \pm \infty$ rapidly enough. 

  One can show that the intersection of any acausal (in the standard sense of Lorentzian causality) hypersurface with the future (or past) of any event possesses a finite total volume. To this end, consider an arbitrary spacetime point (or event) $p$, and let $G_1, G_2$ be the two null geodesics emanating from $p$. Let $\Sigma$ be an arbitrary acausal hypersurface, and let $\Sigma_1 := \Sigma \cap {\rm Fut}(p)$ be the part of $\Sigma$ composed of points in $\Sigma$ that can be reached from $p$ by a future directed causal curve. 
  Assuming $\Sigma_1 \neq \varnothing$, it can be seen that two cases may occur: \\
  1) $\Sigma$ intersects both $G_1$ and $G_2$. In other words, $\partial \Sigma_1  =\{p_1,p_2\}$. We can check that the spacetime volume of any tubular neighborhood of $\Sigma_1$ (of bounded width) is finite. \\  
  2) $\Sigma$ does not intersect $G_1$ or $G_2$ (or both). Here, too, one can check that for a suitable choice of the function $a(t)$, the total volume of a tubular neighborhood $\mathbb{T}$ of $\Sigma_1$ of sufficiently small width $\varepsilon >0$ is finite. \\
  To see this, consider for instance two acausal hypersurfaces (actually, curves!) defined by $\Sigma_1: \eta=\eta_1(x)$ and $\Sigma_2: \eta= \eta_2(x)$ respectively. These two hypersurfaces are assumed to lie entirely inside the future lightcone of the origin $(0,0)$. In order for them to be acausal, the functions $\eta_1$ and $\eta_2$ must satisfy $|\eta_1'(x)| <1, |\eta_2'(x)| <1$. The spacetime volume sandwiched between $\Sigma_1$ and $\Sigma_2$ (assuming they never cross , this is the case if, for instance, $\eta_1(x) < \eta_2(x)$ for all $x$) is given by
  $$ \cV({a,b}) = \int_{a}^{b} dx \int_{\eta_1(x)}^{\eta_2(x)} a^2(\eta) d\eta. $$
  We have the following inequalities:
  $$ |\int_{\eta_1(x)}^{\eta_2(x)} a^2(\eta) d\eta| \leq \max_{\eta_1(x) \leq \eta \leq \eta_2(x)} a^2(\eta) \cdot (\eta_2(x) - \eta_2(x)),$$
  and
  $$ \cV(a,b) \leq   \int_a^b  dx N(x)\cdot M(x), $$
  with $M(x) := \max_{\eta_1(x) \leq \eta \leq \eta_2(x)} a^2(\eta))$ and $N(x) :=  (\eta_2(x) - \eta_1(x))$.
  Now keeping $N(x)$ bounded for all $x$ and assuming that the scale factor decreases rapidly just enough, it is possible to get $\cV:= \lim_{a \to -\infty, b \to +\infty} \cV(a,b)$ finite. Here the acausality conditions $|\eta_i'(x)|= |d\eta_i/dx|<1$ (for $i=1,2$) plays a crucial role since it allows us to have a spatially non-compact FRW universe while keeping the volume $\cV$ finite. More precisely, we can control the growth rate of $M(x)$ through the functions $\eta_1(x), \eta_2(x)$ and $a(\eta)$. \\
  Due to the above considerations, it can be shown that for any faithful sprinkling (as explained, e.g.,  in ~\cite{[RW]} and recalled in section \ref{app})  of this spacetime, the obtained causal set satisfies all the assumptions (a) through (c) in section \ref{assump}. Here the above sandwich volume estimation allows us in particular to verify the applicability of assumption (b).
 \subsubsection{Counter-examples}
  Here we provide some counter-examples to the assumptions (a)-(c) above, which might or might not be relevant to the actual cosmological spacetime.   \\ 
  1. Discretized Minkowski spacetime: Let us consider ~\cite{[W]} for simplicity a 1+1 flat Lorentzian spacetime lattice $\cL$, generated by the vectors $(1,0)$ (timelike) and $(1/2,\sqrt{5}/{2})$ (spacelike).
  Let $\Phi_L: \cL \to \cL$ be the map generated by a Lorentz boost sending $(-1/2,\sqrt{5}/2)$ to $(1/2,\sqrt{5}/2)$. \\
  It is easy to check that $\Phi$ is a bijective map onto $\cL$, and furthermore, $\Phi$ preserves causal relation: $a \prec b$ iff $a \in J^{-}(b)$, i.e. $a$ lies in the past light cone of $b$. \\
  The lattice point with coordinates $(0,0)$ is in the past of the infinite antichain $A$ defined as the orbit, under the action of $\Phi$, of the lattice point $(1,0)$ (i.e. the set of points of the form $\Phi^k((1,0))$, for $k$ an integer, where $\Phi^k:= \Phi \circ \dots \circ \Phi$ ($k$-times)). \\
  It is worth to note that the antichain $A$ is a maximal antichain, for which the clause (iv) of definition \ref{foliation_0} does not apply. \\
  2. The assumption (b') does not hold for flat Lorentzian discretized spacetimes. Indeed, the continuum analog of assumption (b') does not hold for Minkowski spacetime: The world line of a constantly accelerated observer (a {\it Rindler observer}) is an example of an inextendible causal curve. The causal chain associated to this observer does not satisfy assumption (b'), since the union of causal pasts and futures of elements of this chain do not cover the whole spacetime.

\subsection{Automorphisms}
In this section we will use the theorem above in order to get new results about automorphisms about well behaved causal sets. We will use the results obtained in this section towards a new quantization scheme of (infinite) causal sets.
\\
Let $\cC$ be an infinite special causal set, with a causality relation denoted by $\prec$.
An {\it automorphism} of $\cC$, is a bijective map $\Phi: \cC \to \cC$
such that  for all $x, y \in \cC$, $x \prec y$ if and only if $\Phi(x) \prec \Phi(y)$. \\
If $\Phi$ is causal automorphism, we can define its inverse map $\Phi^{-1}$ and it can be seen that $\Phi^{-1}$ is a causal automorphism too. Trivially $\Phi^{-1}$ is a bijection $\Phi^{-1}: \cC \to \cC$. Moreover,  the equivalence $x \prec y \Leftrightarrow \Phi^{-1}(x) \prec \Phi^{-1}(y)$ follows at once from the equivalence $u \prec v \Leftrightarrow \Phi(u) \prec \Phi(v)$ by letting $u := \Phi^{-1}(x), v := \Phi^{-1}(y)$. \\
It is a standard fact the a bijective map $\phi: \cC \to \cC$ is a causal automorphism just in case it is order preserving ($(x \prec y) \Rightarrow  (\phi(x) \prec \phi(y))$) as well as its inverse map. \\
The {\it orbit} of an element $y \in \cC$ under $\Phi$ is the set
$\cO(y) := \{ z \in \cC | z = \Phi^\ell(y) \text{ for some } \ell \in \mathbb{Z} \}$, where $\mathbb{Z}$ is the set of integers. Here we denote by  $\Phi^\ell(y)$ (for $\ell > 0$) the element $\Phi \circ \dots \circ \Phi \; (\ell\; \text{times})$ (i.e. $\Phi^\ell$ is the $\ell$-th fold iteration of $\Phi$), and  for $\ell =0$ we let $\Phi^0 = {\rm Id}_{\cC}$ (the identity automorphism). For $\ell <0$ we denote by $\Phi^{\ell}$ the $(-\ell)$-th iterate of $\Phi^{-1}$.   \\
 It is clear that if $\Phi$ is a causal automorphism then $\Phi^\ell$ is a causal automorphism too. First we observe that $\Phi^\ell$ is a bijection $\Phi^{\ell}: \cC \to \cC$ (standard set-theoretic fact). That $\Phi^\ell$ is further order-preserving (as well as its inverse) can be shown by applying $\Phi$ (or $\Phi^{-1}$ whenever appropriate) iteratively to the equivalence $x \prec y \Leftrightarrow \Phi(x) \prec \Phi(y), \forall x, y \in \cC$ (respectively the equivalence $x \prec y \Leftrightarrow \Phi^{-1}(x) \prec \Phi^{-1}(y), \forall x, y \in \cC$).

 We have the following results:
 \begin{prop} \label{prop-anti_chain}
   Let $\Phi: \cC \to \cC$ be a causal automorphism, $\cC$ being an arbitrary causal set. Let $z \in \cC$ be any element, and assume that $z$ and $\Phi^\ell(z)$ are incomparable for all $\ell \in \mathbb{Z}\setminus \{0\}$ (i.e. $\ell$ is a non-zero integer), then the orbit $\cO(z)$ is an antichain.
 \end{prop}
 \begin{proof}
  Let $z$ be as above. Since $z$ and $\Phi^\ell(z)$ are incomparable for all $\ell \neq 0$, we obtain $\Phi^\ell(z) \| \Phi^k(z)$ for all distinct $k, \ell \in \mathbb{Z}, k \neq \ell$. This can be seen by taking the contrapositive; for if (say, the other possibility is handled in a symmetric way) $\Phi^\ell(z) \prec \Phi^k(z)$ then applying $\Phi^{-k}$ to both sides of the relation we get $\Phi^{\ell -k}(z) \prec z$, $\ell -k \neq 0$ hence $z$ and $\Phi^{\ell -k}(z)$ are comparable, contradicting the hypothesis.
  \end{proof}
\begin{lem} \label{lem}
 Let $\cC$ be a causal set satisfying the assumptions (a)-(c) above and let $\Phi$ be a non-trivial automorphism of $\cC$ (i.e. $\Phi$ is not the identity). Then, there are three mutually incompatible possibilities for $\Phi$: \\
(i)    For all $x \in \cC$, there exists $k \in \mathbb{N}, k \neq 0$ such that $x \prec \Phi^k(x)$;  \\
(ii)   For all $x \in \cC$, there exists $k \in \mathbb{N}, k \neq 0$ such that $\Phi^k(x) \prec x$; \\
(iii)  For all $x \in \cC$, the orbit $\cO(x)$ is an antichain.
\end{lem}
\begin{proof}
 Let $\Phi$ be a non-trivial automorphism. \\
 We show the following claim: \\
 {\bf Claim:} If $\Phi$ has a fixed point, then the orbit of any element $x$ under $\Phi$ is an antichain. \\
  To see this, let $x_0$ be such a fixed point. What needs to be shown is that, for every $x$, $x$ and $\Phi^\ell(x)$ are either incomparable, or equal.
   By the connectedness of $\cC$ (assumption (a)!), for any $x \in \cC, x \neq x_0$, there exists a sequence of distinct elements $x_1, \dots, x_k=x$ such that for all $1 \leq \ell \leq k$, $x_{\ell-1} \prec x_{\ell}$ or $x_{\ell} \prec x_{\ell -1}$. The claim is now proved by induction on $k$. If $k=1$, then $x=x_1 \prec x_0$ (or the other way round). Here $x_1 \prec x_0$ entails $\Phi^\ell(x_1) \prec \Phi^{\ell}(x_0)= x_0$ ($x_0$ is a fixed point). \\
   Let us take the contrapositive. So we assume that $x_1 \prec \Phi^m(x_1)$ for some $m >0$ (say, the other option $\Phi^m(x_1) \prec x_1$ can be treated similarly). It follows that $x_1 \prec \Phi^k(x_1) \prec x_0$ for all $k= \ell \cdot m >0$, $\ell$ being a positive integer. The last relation contradicts local finiteness, since we would have obtained an infinite interval $[x_1, x_0]$. \\
   The remaining possibility is that the orbit of $x_1$ is an antichain (see \ref{prop-anti_chain}). Observe that $\cO(x_1) \subset Past(x_0)$ ($x_1 \prec x_0 \Rightarrow \Phi^\ell(x_1) \prec \Phi^\ell(x_0)=x_0$), implying that $\cO(x_1)$ is necessarily finite, otherwise we get a contradiction with assumption (b).  \\
   To show the induction step, we shall prove a similar statement to that shown in the previous paragraph: namely if $x_k$ has a finite orbit under $\Phi$ (such an orbit must then be necessarily an antichain), then any element $x_{k+1}$ comparable to $x_k$ has a finite orbit. Denote the respective images of $x_k$ as $x_k^0=x_k, x_k^1= \Phi(x_k), \dots, x_k^N= \Phi^N(x_k),  \Phi^{N+1}(x_k)=x_k$. Suppose $x_{k+1} \prec x_k$, $x_{k+1} \neq x_k$. If the orbit of $x_{k+1}$ is infinite, then by the fact that an automorphism is order preserving, infinitely many elements of the orbit of $x_{k+1}$ lie under any element $x_k^i$, for $i=0, \dots, N$. By a similar reasoning as above, either we get a contradiction with local finiteness or with assumption (b).
   This contradiction proves the claim. \\

   Throughout the rest of this proof, it will be assumed henceforth (without explicit mention) that for all elements $x \in \cC$, $ x \neq \Phi^\ell(x)$ for all $\ell \neq 0$. \\
 Assume that for some $x_0 \in \cC$, we have $x_0 \prec \Phi^k(x_0)$ for some $k \in \mathbb{N}^\star$. Let $x \in \cC$ any point $x \in \cC$. Then, by assumption (a), there exists a sequence $x_0,x_1, \dots , x_n=x$ such that for any $i, 0 \leq i < n$, $ x_i \prec x_{i+1}$ or $x_{i+1} \prec x_i$.  It is easy to see that for all $u,w \in \cC$ if, $u \prec w$ (or $w \prec u$) then: $u \prec \Phi^k(u)$ for some $k >0$, implies $w \prec \Phi^{\ell}(w)$ with $\ell >0$.

Assume not. \\
 Then, either $ \Phi^{\ell}(w) \prec w$ (for some $\ell >0$) or $w$ and $\Phi^{\ell}(w)$ are unrelated for all $\ell \in \mathbb{Z}^*$.

In the first case, we get a contradiction with the assumption of local finiteness. To see this, let $m_0= k \cdot \ell$ (with $k$ and $\ell$ given above). Then we have
$ u \prec w \rightarrow  \Phi^m(u) \prec \Phi^m(w)$ for all $m = n \cdot m_0$, ($ n \in \mathbb{N}^*$) and hence $u \prec \Phi^m(u) \prec \Phi^m(w) \prec w$ by assumption so it follows that there are infinitely many elements $\Phi^m(u)$  ($m \in m_0 \cdot \mathbb{N}$) in the interval $[u,w]$, contradiction.

In the second case, we obtain a contradiction with assumption (b). Here it suffices to notice that the orbit of $w$ is an infinite antichain (the case of a finite antichain being already excluded), and that infinitely many elements of this orbit are in the future of $u$.

\end{proof}

The main result of this section is the following:
\begin{theorem} \label{auto}
 Let $\cC$ and $\Phi$ be as above, with $\Phi$ distinct from the identity. \\
There exists a foliation $F$ of $\cC$ such that either: \\
(a) For every spacelike slice $X$ of $F$, $\Phi(X) = X$, ($\Phi$ restricts to an automorphism of $X$, for each slice $X$ of $F$), or\\
(b) There exists a chain $C \subset \cC$  and an integer $k \neq 0$ such that for every $x \in C$, $x$ and $\Phi^k(x)$ are related (i.e. comparable), and for every spacelike slice $X \subset \cC$ in $F$, there exists a spacelike slice $Y$ in $F$ such that $\Phi(X)=Y$, $X \neq Y$.   \\

\end{theorem}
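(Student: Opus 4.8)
The plan is to split according to whether $\Phi$ drags some event strictly forward (or backward) along the causal order, and in each case to construct the foliation by an equivariant rerun of the Zorn argument that proved Theorem~\ref{foliation}. For $x\in\cC$ write $O(x):=\{\Phi^n(x):n\in\ZZ\}$ for its $\Phi$-orbit; since $\Phi$ preserves $\prec$, the events $\Phi^a(x)$ and $\Phi^b(x)$ are comparable if and only if $x$ and $\Phi^{b-a}(x)$ are. Exactly one of two cases occurs. \emph{Case I:} for every $x$ and every $n\neq0$, $x$ and $\Phi^n(x)$ are incomparable, i.e.\ every orbit $O(x)$ is an antichain. \emph{Case II:} there are $x_0$ and $n_0\neq0$ with $x_0$ and $\Phi^{n_0}(x_0)$ comparable; replacing $\Phi$ by $\Phi^{-1}$ and/or $n_0$ by $-n_0$, we may assume $x_0\prec\Phi^{k}(x_0)$ for an integer $k\neq0$, whence $x_0\prec\Phi^{k}(x_0)\prec\Phi^{2k}(x_0)\prec\cdots$, so $C:=\{\Phi^{nk}(x_0):n\in\ZZ\}$ is a chain on which $\Phi^{k}$ acts as the shift $\Phi^{nk}(x_0)\mapsto\Phi^{(n+1)k}(x_0)$. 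In Case~II this $C$ and $k$ are already the chain and integer required by alternative~(b), so there it only remains to produce a foliation in which $\Phi$ moves every slice.

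\emph{Case~I gives alternative (a).} Call a partial foliation \emph{$\Phi$-invariant} if $\Phi$ maps each of its slices onto itself, and rerun the proof of Theorem~\ref{foliation} inside the poset, ordered by $\sqsubseteq$, of $\Phi$-invariant partial foliations. It is nonempty: for any $x$, the orbit $O(x)$ is an antichain (Case~I) and is $\Phi$-stable, so $\{O(x)\}$ lies in it. The supremum $F_{\rm sup}$ of a totally ordered subfamily, built exactly as in Theorem~\ref{foliation}, is again $\Phi$-invariant because a union of $\Phi$-stable slices is $\Phi$-stable; Zorn then yields a maximal $\Phi$-invariant partial foliation $F_{\rm max}$ whose slices are all $\Phi$-stable. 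If some $y\notin\bigcup F_{\rm max}$ existed, then, using that each slice $X$ of $F_{\rm max}$ is a $\Phi$-stable antichain, that $\Phi$ preserves $\prec$, and that $y\notin X$, one checks that $O(y)$ is, relative to each such $X$, either incomparable to $X$, or lies entirely before $X$, or lies entirely after $X$ (the mixed possibilities are killed by applying suitable powers of $\Phi$ and invoking antisymmetry). Then, precisely as in Theorem~\ref{foliation}, one adjoins $O(y)$ to an arbitrary slice (if $O(y)$ is incomparable to all slices), or to a slice lying strictly between those before $O(y)$ and those after it, or, failing that, inserts $O(y)$ as a new slice in that gap; each produces a $\Phi$-invariant partial foliation strictly above $F_{\rm max}$, a contradiction. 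Hence $F_{\rm max}$ is a foliation with $\Phi(X)=X$ for every slice $X$.

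\emph{Case~II gives alternative (b).} Here one first obtains \emph{some} $\Phi$-invariant foliation $F$ through $C$: extend the totally ordered partial foliation $\{\{\Phi^{nk}(x_0)\}:n\in\ZZ\}$ to a full foliation by the Zorn argument of Theorem~\ref{foliation}, and then arrange it to be $\Phi$-invariant. Let $\bar\Phi$ be the induced order-automorphism of the index set $I$. A type-(a) foliation cannot exist at all in Case~II, since the slice containing $x_0$ would then be a $\Phi$-stable antichain containing both $x_0$ and $\Phi^{k}(x_0)$ with $x_0\prec\Phi^{k}(x_0)$; hence $\bar\Phi\neq\mathrm{id}_I$. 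It remains to upgrade this to: no slice of $F$ is $\Phi$-stable, i.e.\ $\bar\Phi$ has no fixed point. If $X_{i^*}$ were $\Phi$-stable, then, since $\bar\Phi^{k}$ sends the index $i_n$ of $\Phi^{nk}(x_0)$ to $i_{n+1}>i_n$ while fixing $i^*$, all the $i_n$ lie strictly on one side of $i^*$; consequently every $z\in X_{i^*}$ is, for each $n$, either incomparable to $\Phi^{nk}(x_0)$ or causally on the far side of it. Thus $X_{i^*}$ would be a $\Phi$-stable antichain lying transversally to the translation direction of $C$, and this is exactly where the well-behavedness hypotheses enter: (a) forbids an antichain from being incomparable to an entire maximal chain extending $C$, and (b) forbids an antichain from meeting the future (or past) of a single event in infinitely many points; together these force a contradiction, leaving $\bar\Phi$ fixed-point-free, so that $(F,C)$ witnesses alternative~(b).

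The substantive obstacle is the last step of Case~II: on the one hand, producing a $\Phi$-invariant foliation through $C$ — the natural seed $\{\{\Phi^{nk}(x_0)\}\}$ is not $\Phi$-stable, so the equivariant Zorn argument of Case~I does not transpose verbatim and one must either run Zorn over a subtler class (e.g.\ $\Phi^{k}$-invariant chains, then build slices from a $\Phi^{k}$-invariant maximal chain through $C$) or symmetrize the outcome; and, on the other hand and more seriously, excluding $\Phi$-stable slices, which is the only place the standing hypotheses (a)--(c) really get used. By contrast, Case~I is a routine equivariant version of Theorem~\ref{foliation}.
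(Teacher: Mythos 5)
Your overall architecture matches the paper's: the same dichotomy (every orbit an antichain versus some orbit causally displaced), an equivariant rerun of the Zorn argument of Theorem~\ref{foliation} in each branch, and the same location where the well-behavedness hypotheses must enter. Your Case~I is essentially the paper's subcase 2(a) and works at the paper's own level of rigor; the trichotomy you invoke (an orbit is incomparable to, entirely before, or entirely after a $\Phi$-stable slice) does follow from antisymmetry, since $\Phi^a(y)\prec z$ and $w\prec\Phi^b(y)$ with $z,w\in X$ and $X$ $\Phi$-stable would yield both a predecessor and a successor of $y$ inside the antichain $X$.

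The genuine gap is Case~II, and you have in effect named it yourself. Two steps are asserted but not carried out, and they are exactly where the content of the theorem lies. First, ``extend $\{\{\Phi^{nk}(x_0)\}\}$ to a full foliation and then arrange it to be $\Phi$-invariant'' is not a construction: a maximal foliation extending that seed need not be permuted by $\Phi$, and the seed is only $\Phi^{k}$-stable, so the equivariant Zorn argument has nothing to start from when $|k|\geq 2$ (the singletons $\{\Phi^{n}(x_0)\}$ for all $n\in\mathbb{Z}$ do not even form a partial foliation then, since consecutive orbit points are incomparable and axiom (iii) fails). The paper resolves this by manufacturing a genuinely $\Phi$-equivariant seed: it proves the Claim that every point is eventually displaced forward by some power of $\Phi$, uses hypotheses (a)--(c) to show the set $\Sigma$ of minimal non-orbit successors of $x_0$ is finite and nonempty and that $\Sigma_0\neq\emptyset$, and takes the two-element slice $S_0'=\{\Phi(x_0),z\}$ with $z\in\Sigma_0$, whose $\Phi$-translates form a partial foliation satisfying $(*)$ and $(**)$. (When $k=\pm1$ the orbit singletons do work as a seed, which is how the paper's Case~1 proceeds; your merging of the two subcases hides this distinction.) Second, your exclusion of $\Phi$-stable slices is only a heuristic (``together these force a contradiction''): hypothesis (b) bounds ${\rm Future}(x)\cap A$ for a \emph{fixed} $x$, so to contradict it you must exhibit a single $x$ whose future meets the putative stable slice in infinitely many distinct points --- for instance by propagating one comparability $\Phi^{nk}(x_0)\prec z$ along the $\Phi^{k}$-orbit of $z$ inside the stable slice --- and the degenerate case where that orbit is finite, as well as the case where the stable slice is wholly incomparable to the chain, need separate arguments via hypothesis (a). Until these two steps are executed, alternative (b) --- the only branch that actually uses the standing hypotheses --- is not established.
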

\begin{proof}
There are two distinct cases:  \\
Case 0- $\Phi$ has a fixed point, $\exists x \in \cC, \Phi(x)=x$. \\
Case 1- $\exists   x   \in \cC \: $  such  that $x$ and $\Phi(x)$ are comparable, i.e. $x \prec \Phi(x)$ or $\Phi(x) \prec x$, and $\Phi(x) \neq x$.   \\
Case 2- $\forall x \in \cC$ , $ \Phi(x)$ and $x$ are incomparable.

 Considering case 0, it follows by Lemma \ref{lem} that the orbit of every element of $\cC$ is an antichain. Let $F$ be any foliation of $\cC$ whose slices are unions of orbits, i.e. each slice $\Sigma$ is of the form $\Sigma = \bigcup_{x \in E} \cO(x)$, where $E$ is antichain. Such a foliation can be easily found using a standard procedure as in Theorem \ref{foliation} and its proof.  Moreover, $F$ (as well as every slice of $F$) as constructed is preserved by $\Phi$, hence item (a) of the Theorem holds for $\Phi$ and $F$.  \\
 Assume henceforth that $\Phi$ does not a fixed point and let us consider case 1:
\begin{itemize}
\item $\exists   x   \in \cC $  such  that  $ \Phi(x) \prec x $ or $x \prec \Phi(x)$.
\end{itemize}
 Let $x_0$ be such an element, and assume for definiteness that $x_0 \prec \Phi(x_0)$, the other case being completely similar.
It follows easily that the orbit of $x_0$ (i.e. the set $\{ \Phi^k(x)| k \in \mathbb{Z}\}$) is a chain.
 Let $X_0 := \{x_0\}$, and let $F_0$ be the partial foliation of $\cC$ containing $X_0$ as its only spacelike slice.   \\
 Let $\cE $  be the set of partial foliations $F$ containing $X_0$, meeting  furthermore the following conditions:  \\
 For every spacelike slice $X \in F$,  $\forall x \in X$, $\Phi(x) \notin X$.   \;\;\;\; $(*)$  \\
 For every spacelike slice $X$ in $F$ there exists one space like slice $Y \neq X$ such that $\Phi(X) = Y$. \;\;\; $(\dagger)$

 Actually, $(*)$ follows once we assume $(\dagger)$ (since spacelike slices are assumed disjoint), hence we shall only care about $(\dagger)$.

Clearly, $\cE$ is non empty because $F'_0$, the partial foliation of $\cC$ whose elements are $\{\Phi^k(x)\},  k \in \mathbb{Z}$, is in $\cE$.

Define an order relation "$\sqsubseteq$" on $\cE$ such that $F \sqsubseteq F'$ if each spacelike slice in $F$ is contained in  some spacelike slice in $F'$.

It is easy to see, similarly to the proof of the theorem \ref{foliation}, that the set $\cE$ equipped with the partial order $\sqsubseteq$ satisfies the conditions of Zorn's Lemma. \\
 For every $\sqsubseteq$-chain $\cY$ of partial foliations $(F_i)_{i \in I}$ is bounded from above by the foliation $F_{\rm sup}$ which satisfies the condition
   $$ X \in F_{\rm sup} \; \;{\rm iff} \;\; {\rm there\; exists\; a \; chain \; (for\; inclusion)}\; (X_F)_{F \in \cY} \;\;{\rm such \; that} \; X= \bigcup_{F \in \cY} X_F .$$
 Exactly as done in the proof of Theorem \ref{foliation} (see proof of Claim $(\ddag)$), one can show that $F_{\rm sup}$ is indeed a partial foliation. The following claim establishes that moreover $F_{\rm sup}$ belongs to $\cE$.

{\it Claim: } $F_{\rm sup}$ satisfies further the conditions $(*)$ and $(\dagger)$.

{\it Proof of claim:}

It remains to show that $(\dagger)$ holds for $F_{\rm sup}$.  \\
Let $X$ be a spacelike slice in $F_{\rm sup}$ as before so $X= \bigcup_{F \in \cY} X_F$ where we keep the same notation and assumptions. Then $\Phi(X)= \bigcup_{F\in \cY} \Phi(X_F)$ and $(\Phi(X_F))_{F \in \cY}$ is a chain for inclusion (clear). In fact, for every $F$, $\Phi(X_F) = Y_F$ where $Y_F \neq X_F$ and $Y_F$ is a spacelike slice in $F$, hence since $\cY$ is a chain for $\sqsubseteq$, we have in fact that $(Y_F)_{F \in \cY}$ is a chain for inclusion.
So $\Phi(X) = Y$ where  $Y = \bigcup_{F \in \cY} Y_F$ and $Y$ is a spacelike slice of $F_{\rm sup}$ so condition $(**)$ holds.

It follows from the above claim that $F_{\rm sup}$ is in $\cE$. Applying Zorn's Lemma we get that there must exist a maximal partial foliation $F_{\rm max}$ in $\cE$.

We claim that $F_{\rm max}$ is a foliation of $\cC$, i.e. in addition to items (ii) and (iii) of \ref{foliation_0} it satisfies (i).

Assume that $F_{\rm max}$ is not a foliation: So $F_{\rm max}$ must violate item (i) of \ref{foliation_0}, and, in particular, there exists some $x \in \cC$, and for all slices $X$ in $F_{\rm max}$, $x \notin X$, i.e. $x$ is not contained in any slice of $F_{\rm max}$.

Let $\cO(x)$ be the orbit of $x$ under $\Phi$. Then we have two possibilities: \\
(a) $\cO(x)$ meets some spacelike slice $X$ of $F_{\rm max}$, or \\
(b) $\cO(x)$ does not meet any spacelike slice in $F_{\rm max}$ (i.e. the intersection of $\cO(x)$ with any slice in $F_{\rm max}$ is the empty set). \\
Observe that, necessarily, there exists some positive integer $k$ such that $x \prec \Phi^k(x)$ (assuming that $x_0 \prec \Phi(x_0)$, otherwise the opposite relation holds). This follows from Lemma \ref{lem}. This remark is essential for the subsequent considerations.

In case (a), we let $y$ be one point of intersection of $\cO(x)$ with some spacelike slice $Y$ of $F_{\rm max}$.  So $x=\Phi^k(y)$ for some $k \in \mathbb{Z}$ and hence necessarily $x \in \Phi^k(Y) \in F_{\rm max}$ contradicting the starting assumption.

 Consider then case (b). There are two subcases:

(b-i) There exist $(Y_k)_{k \in \mathbb{Z}}, Y_k \in F_{\rm max} \; \text{for} \; k \in \mathbb{Z}$, such that for some  $y \in O(x)$, $y$ is incomparable to $Y_{k_0}$ (for some $k_0 \in \mathbb{Z}$)  and $Y_{k_0}$ lies strictly between slices in $F_{\rm max -}$ and $F_{\rm max +}$  respectively, where $F_{\rm max -}$ is the set of slices in $F_{\rm max}$ that precede $y$ (i.e. contain a predecessor to $y$) and $F_{\rm max +}$ is the set of slices in $F_{\rm max}$ that succeed $y$, and $\Phi^\ell(y)$ is incomparable to $Y_{k_0+\ell}$ and $Y_{k+\ell} = \Phi^\ell(Y_k)$ for all $k, \ell \in \mathbb{Z}$.
 In this case it suffices to add $\Phi^\ell(y)$ to $Y_{k_0+\ell}$ for every $\ell \in \mathbb{Z}$ to obtain a new partial foliation $F$ strictly greater (in the order $\sqsubseteq$) than $F_{\rm max}$, thus getting a contradiction.
\par
 Before proceeding to case (b-ii), let us define a new strict order relation on antichains of partial foliations of $\cC$. This relation will be denoted $<$, and is induced by $\prec$: namely, for any two spacelike slices $Y,Y'$, we say that $Y < Y'$ just in case for some, $y, y'$ in $Y, Y'$ respectively $y \prec y'$ and $y \neq y'$. \\
(b-ii)  Case (b-i) does not occur. In this case there are also two subcases to consider:

(ii-i) For some $y \in O(x)$ we have $y$ is comparable to every spacelike slice $Y$ in $F_{\rm max}$. In this case we add  the slice $\{y\}$ to the partial foliation $F_{\rm max}$ in the same way as in the proof of Theorem \ref{foliation}, namely let $\{y\}$ be the (new) slice intervening between slices $Y$ and $Y'$ where $Y < Y'$, $Y$ is a maximal element (for the strict order relation  $<$) in  $F_{\rm max-}$ and $Y'$ is a minimal (for $<$) element in $F_{\rm max+}$.  Here $F_{\rm max-}$ is the set of slices in $F_{\rm max}$ having a predecessor to $y$; similarly $F_{\rm max+}$ is the set of slices in $F_{\rm max}$ having a successor to $y$. \\
The other elements of $O(x)$ will be added in the cuts accordingly, namely if $Y < \{y\}< Y'$ then we require that $\Phi^k(Y) < \{\Phi^k(y)\} < \Phi^k(Y')$ in the new foliation. This is possible since $\Phi$ is an automorphism. Thus we get a new foliation $F'_{\rm max} \neq F_{\rm max}$ and $F_{\rm max} \sqsupseteq F_{\rm max}$. Furthermore, $F'_{\rm max}$ is preserved by $\Phi$, hence contradicting the maximality of $F_{\rm max}$ in $\cE$.

(ii-ii) Case (ii-i) does not occur.  If there exists slices $Y,Y'$ in $F_{\rm max}$ such that $y\prec z \prec y'$ for some $y\in Y$, $y' \in Y'$ and $z \in O(x)$ (in particular, it follows from these assumptions that $Y <Y'$) and there is no intermediate slice (in $F_{\rm max}$) between $Y$ and $Y'$, then we can still add $\{z\}$ to $F_{\rm max}$ and then add the other elements of $O(x)$ in the same way in the corresponding places, and get a partial foliation strictly greater that $F_{\rm max}$, contradiction. \\
Otherwise, let $Y, Y'$ be slices in $F_{\rm max}$ and $z \in O(x)$ as above but assume now that there are slices intervening between them. In this case, we can still add $\{z\}$ between $Y$ and $Y'$, and we add the other elements of $\cO(x)$ correspondingly. Again this is possible since $\Phi$ is an automorphism.

In all cases we obtain a partial foliation which is strictly greater than $F_{\rm max}$ contradicting the hypotheses. So $F_{\rm max}$ is a foliation as required.

\begin{figure}[!h]
\begin{center}
\includegraphics[scale=3.,height=10.cm,width=10.cm]{./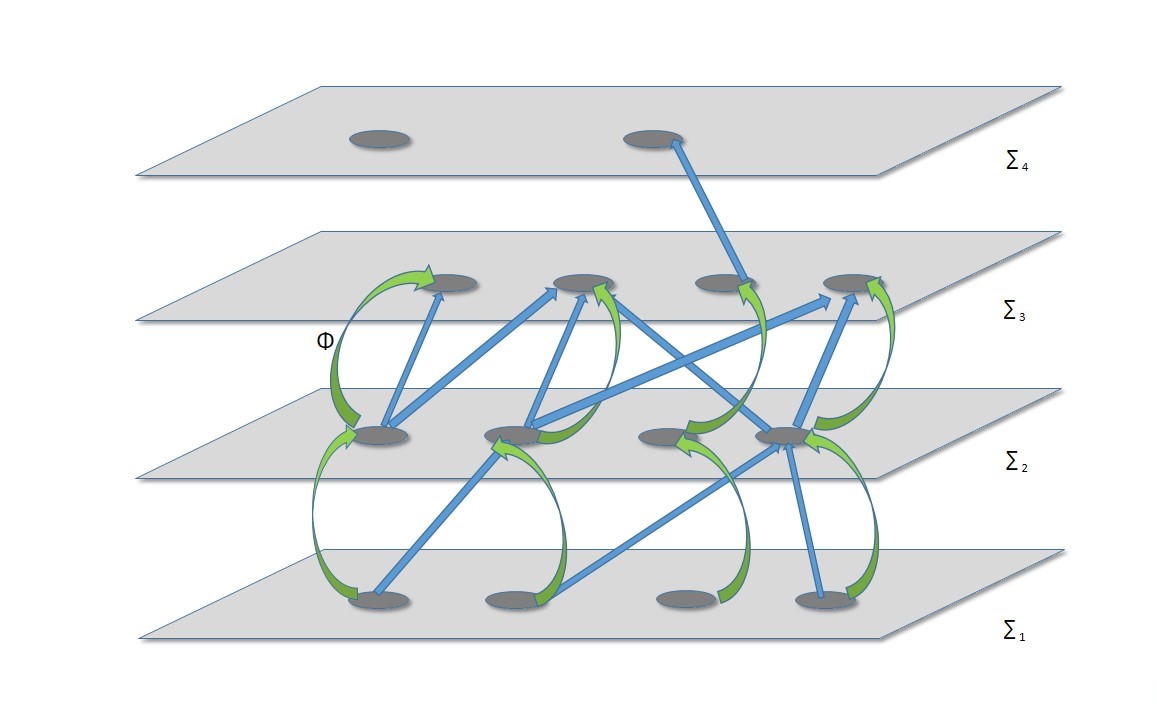}
\caption{A foliation-preserving causal automorphism-case 1. Not all (curved) arrows $x \mapsto \Phi(x)$ are shown.}
\end{center}
\end{figure}

 Now consider the third case:
 \begin{itemize}
  \item  For every $x$ in $\cC$, $x$ and $\Phi(x)$ are incomparable.
 \end{itemize}

Here we have two subcases:   \\
3-a) For every $x \in \cC$, and for every integer $k \neq 0$, $x$ and $\Phi^k(x)$ are incomparable.   \\
3-b) There exists some $x \in \cC$, and some integer $k \neq 0,1,-1$ such that $x$ and $\Phi^k(x)$ are comparable.

 We consider case (2-a) first: Let $x_0 \in \cC$ (since $\cC$ is not empty). Let $\cO(x_0)$ be the orbit of $x_0$ under $\Phi$. Then $\cO(x_0)$ is an antichain by assumption. Then $F_0:=\{\cO(x_0)\}$ is a partial foliation of $\cC$.
 Let $\cE$ be the set of partial foliations  $F= \{X_i| i \in I\}$ where $I \subset \mathbb{Z}$ such that
$$ \text{For every} \;\; i \in I, \; \text{all} \;\; x \in X_i \; \;\text{we have} \;\; \Phi^k(x) \in X_i \qquad \qquad \qquad \qquad \qquad \qquad \quad (\maltese) $$
It is clear that $\cE_2$ is not empty, since $F_0 \in \cE_2$.  \\
One can repeat the same steps in the proof of Theorem \ref{foliation} in this case. More precisely, any chain (for the relation $\sqsubseteq$ defined similarly to above) has an upper bound, and hence we conclude using Zorn's Lemma that $\cE_2$ has a maximal element $F_{\rm max}$.  \\
Supposing $F_{\rm max}$ is not a foliation, there exists some $x \in \cC$ which is not in a spacelike slice from $F_{\rm max}$. \\
Let $O(x)$ be the orbit of $x$ under the action of $\Phi$.  \\
Similarly to the proof of theorem \ref{foliation}, we consider two possibilities; namely whether $x$ is comparable to some spacelike slice in $F_{\rm max}$ or not. Either way, one can add all elements of $O(x)$ to some convenient slice of $F_{\rm max}$ or add a new slice containing $O(x)$ exactly as we did in the end of proof of Theorem \ref{foliation}. The fact that $\Phi$ is an order automorphism allows such a procedure, hence we obtain a foliation strictly larger than $F_{\rm max}$, contradiction.

Next we handle case (2-b).

  Let $k_0 \neq 0,1, -1$ be the least (in absolute value) integer such that, for some $x \in \cC$, $x \prec \Phi^{k_0}(x)$ or $\Phi^{k_0}(x) \prec x$, and let $x_0 \in \cC$ such a realization (i.e. $x_0 \prec \Phi^{k_0}(x_0)$ or $\Phi^{k_0}(x_0) \prec x_0$).

 Using our assumptions on the causal set that we presented at the beginning of this section, we can still show the existence of a foliation $F$  of $\cC$ satisfying the following: \\
 1. $F$ satisfies the axioms of foliation, namely (i), (ii) and (iii) of Definition \ref{foliation_0}. \\
 2. $F$ satisfies the clause (b) of the Theorem.

 Let again $\cE_3$ be the set of all partial foliations of $\cC$ satisfying $(*)$ and $(\dagger)$ under case (1).
 Again, we show that $\cE_3$ is not empty.

 We consider the element $x_0$. By our hypotheses, $x_0 \prec \Phi^{k_0}(x_0), k_0 >1$ (say) and for $0< \ell <k_0$ $x_0$ and $\Phi^\ell(x_0)$ are unrelated.

\begin{figure}[!h]
\begin{center}
\includegraphics[scale=3.,height=10.cm,width=10.cm]{./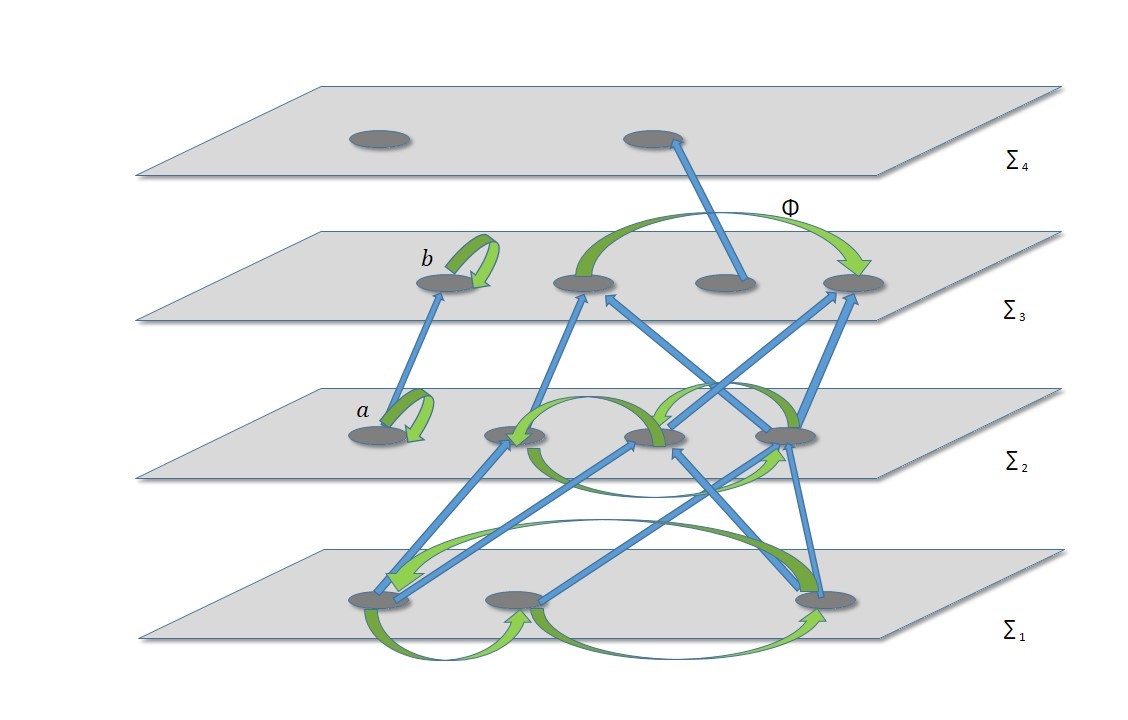}
\caption{A foliation-preserving causal automorphism-case 2. Not all (curved) arrows $x \mapsto \Phi(x)$ are shown. $a$ and $b$ are fixed points. The causal set shown is not connected, so the existence of fixed points does not contradict the Theorem.}
\end{center}
\end{figure}

  Consider the set $\Sigma$ defined as the set of minimal elements $x$  such that $x_0 \prec x \; \& \; (\Phi^{\ell}(x_0) \neq x)$ for all $\ell >0$. Then $\Sigma$ is an antichain, and in fact, $\Sigma$ is of the form
${\rm Future}(x_0) \cap A$ where $A$ is some antichain, hence $\Sigma$ must be finite by our assumptions.   \\
Similarly, let $\Sigma'$ be the set of maximal elements preceding $x_0$, which are not of the form $\Phi^\ell(x_0), \ell <0$.  \\
Observe that $\Sigma$ and $\Sigma'$ cannot be empty. Otherwise, if $\Sigma = \emptyset$ (say), using the fact that there exists an infinite antichain (by (c) of \ref{assump}), we can deduce (using (a) of \ref{assump} and Lemma \ref{lem}) that for some $x$ to the future of $x_0$, $x$ is both in the orbit of $x_0$ and $y$, where $y \notin O(x_0)$, contradiction. Such a $y$ exists necessarily since any set of elements of the form $\Phi^\ell(x_0)$ possesses only finitely mutually incomparable (or unrelated) elements.  \\
Consider the set $\Sigma_0= \{ y \in \Sigma | \neg(\Phi^\ell(x_0) \prec y), \ell >0\}$, where $\neg$ denotes the (logical) operator for negation. $\Sigma_0$ is not empty, otherwise $\Phi^\ell(\Sigma) =\Sigma$ for all $\ell> 0$, contradiction with the fact that $\Sigma$ is finite and $\Phi$ order preserving.
\par
Let $F_0$ be the partial foliation defined as follows:
We take any $z \in \Sigma_0$ as above, and consider the slice $S'_0=\{\Phi(x_0),z\}$. Let $S_0=\Phi^{-1}(S'_0)$ and $F_0=\{ \Phi^\ell(S_0)| \ell \in \mathbb{Z}\}$. Then $F_0$ is a partial foliation of $\cC$ satisfying the $(*)$ and $(**)$ (immediate), and hence $F_0 \in \cE_3$, so $\cE_3 \neq \emptyset$, as required.  \\
Finally we are able to complete the proof as usual, using Zorn Lemma to get a maximal element $F_{\rm max}$ in $\cE_3$. It is easily seen, by repeating the steps of the proof for the first case, that $F_{\rm max}$ is in fact a foliation, and not only a partial foliation, which satisfies the clause (b) of the statement of the Theorem.
\end{proof}

\subsection{General causal sets}
 What happens when we drop assumption (b) above?   \\
 Let more generally $\cC$ be a causal set satisfying (a) and (c) above, and let $\Phi: \cC \to \cC$ be a causal automorphism.   \\
 First, observe that the following holds:
\begin{prop}   \label{prop}
  Let $\cC$ be a causal set satisfying assumptions (a) and (c) from section \ref{assump}, and let $\Phi: \cC \to \cC$ be a causal automorphism distinct from the identity. \\
  There is a partition of $\cC$ into sub-causal sets $\cC_i, i=1, 2, 3$, (allowing the possibility that some of the $\cC_i$ is empty), such that: \\
  i)  For all $x \in \cC_1$, there exists $k \in \mathbb{N}^*$  such that $x \prec \Phi^k(x)$; \\
  ii) For all $x \in \cC_2$, there exists $k \in \mathbb{N}^*$  such that $\Phi^k(x) \prec x$; \\
 iii) For all $x \in \cC_3$ an for all $k \in \mathbb{Z}^*$, $x \| \Phi^k(x)$.
\end{prop}
\begin{proof}

The existence of the partition follows trivially from the following fact: Whenever an element $x$ has one of the (mutually incompatible properties) (i, ii or iii), then all elements in the orbit of $x$ under $\Phi$ possess this property. Now take each $\cC_i$ as the union of the orbits of elements satisfying the corresponding properties.
\end{proof}

Using the above proposition, one obtains the following theorem:
\begin{theorem} \label{general}
 Let $\cC$ be a causet as in proposition  \ref{prop}, and let $\Phi$ be a causal automorphism of $\cC$. Then $\cC$ admits a partition into subcausets $\cC_i$, $i =1, 2, 3$ (as in the above proposition), such that on each of which the conclusions  of Theorem \ref{auto} hold.
Furthermore, if the causet $\cC_1$ (respectively $\cC_2$) each contains at most $k$ (for some positive integer $k$) connected components, and if the causet $\cC_3$ of the above partition contains at least $2k$ distinct orbits of $\Phi$, then we can find a partition of $\cC$ into causets $\cC'_i$, $i=1, \dots, 2k$ such that for each $i$ $\cC'_i$ satisfies the conclusions of Theorem \ref{auto} as well as the assumptions (a) and (c) from section \ref{assump}.

\end{theorem}
\begin{proof}
  To show the first statement, it suffices to observe that once a causet satisfies one of the clauses of Lemma \ref{lem}, then the proof of Theorem \ref{auto} goes through almost unaltered.
  And the result follows from this observation by proportion \ref{prop}.   \\
  To show further the additional statement we exclude some counter-examples, like the following: A causet $\cC = \cC_1 \cup \cC_2\cup  \cC_3$, where on $\cC_1$, $\Phi$ is increasing, on $\cC_2$, $\Phi$ is decreasing and on $\cC_3$, $\Phi$ sends an element to an uncomparable element, such that $\Phi$ is transitive on  $\cC_3$, i.e. $\cC_3$ is the orbit of any of its elements.  \\
Note that this (and similar other) counterexamples are irrelevant from the physical point of view, i.e. where we assume in general that the causet is obtained by "sprinkling"  events randomly on a space-time manifold. \\
Let $k$ be the maximum number of connected components of $\cC_1$ or $\cC_2$. \\
 Assuming that the causet $\cC_3$ contains at least $2k$ distinct orbits of $\Phi$, then it is possible to find a new partition of $\cC=\bigcup_{i=1}^k (\cC'_{1i} \cup \cC'_{2i})$, with $\cC'_{1i}$ and $\cC'_{2i}$ are closed under the action of $\Phi$ and such that:   \\
a)  On $\cC'_{1i}$ the (restriction of the) automorphism $\Phi$ satisfies: For all $x \in \cC'_{1i}$, $x \prec \Phi^k(x)$ (for some $k >0$) or $x$ and $\Phi^\ell(x)$ are incomparable for all $\ell \in \mathbb{Z}^*$; \\
b)  On $\cC'_{2i}$ the (restriction of the) automorphism $\Phi$ satisfies: For all $x \in \cC'_{2i}$, $\Phi^k(x) \prec x$ (for some $k >0$) or $x$ and $\Phi^\ell(x)$ are incomparable for all $\ell \in \mathbb{Z}^*$;

The causets $\cC'_{1i}$ and $\cC'_{2i}$ are constructed as follows: \\
Let, for a given $i$, $\cO_{1i}$ (respectively $\cO_{2i}$) be an orbit of $\Phi$ (in $\cC_3$) which is causally connected to $\cC_{1i}$ (respectively to $\cC_{2i}$). This is possible by the assumption of connectedness of $\cC$ (and the proof of Lemma \ref{lem}). Here the different orbits are disjoint.   \\
Set $\cC'_{1i} := \cC_{1i} \cup \cO_{1i}$ and $\cC'_{2i}=\cC_{2i} \cup \cO_{2i}$, and each is closed under the action of $\Phi$ (this is possible by our hypotheses).  \\
Then it is clear that the clauses (a) and (b) above hold for $\cC'_{1i}$ and $\cC'_{2i}$ respectively.

It follows that: \\
i) $\cC'_{1i}$ admits a foliation $F_1$ into spacelike slices $Y$, such that $Y < \Phi^k(Y)$ for some fixed $k>0$; and,  \\
ii) $\cC'_{2i}$ admits a foliation $F_2$ into spacelike slices $Y$, such that $\Phi^k(Y) < Y$ for some fixed $k>0$.

The proof of either (i) or (ii) is done similarly to that of case (2) of the proof of Theorem \ref{auto} above. \\
That the causets $\cC'_{1i}$ and $\cC'_{2i}$ are each connected follows by their very construction.  \\
Also, both $\cC'_1$ and $\cC'_2$ contain each an infinite antichain (namely each contains at least one orbit of $\Phi$ (inside $\cC_3$) by hypothesis, and these orbits are infinite), hence they satisfy assumption (c) of \ref{assump}.  \\
So the conclusions of the Theorem are fulfilled.
\end{proof}
Here we stress that the proposed partition is dependent upon the automorphism $\Phi$; for different automorphisms, we expect to get different partitions.

\subsection{Example}
 Considering again the example \ref{aron}, it can be seen that the Lorentz boost $\Phi_L: \cL \to \cL$ of the lattice 1+1 spacetime described in \ref{aron}.
 Let us consider the partition of the $\cL$ into three discrete sub-lattices: $L_1, L_0$ an d $L_2$. By $L_0$ we denote the (discrete) causal cone of the origin $(0,0)$, while $L_1$ and $L_2$ are the two connected components of $\cL\setminus L_0$. Restricted to each $L_i$ ($i=0,1,2$) respectively, it is easy to find respective foliations $F_i$: On $L_0$, the Lorentz boost $\Phi_L$ preserves each (acausal) slice, while on $L_1$ (respectively $L_2$) the Lorentz boost transform past slices into future one or the other way round, respectively.
\section{Application}  \label{app}
   In this section we present a speculative application of Theorem \ref{foliation} to the problem of embedding a causal set into a Lorentzian manifold. Let $(M, g)$ be a globally hyperbolic Lorentzian manifold, with metric $g$. We select a "sprinkling" of $M$ (or a region of finite volume of $M$) randomly by points such that the probability of finding $m$ points in any region having volume $V$ is given by
  $$ P_V(m)= \frac{(\rho V)^m}{m!} e^{-\rho V}, $$
  with $\rho$ the inverse of the discreteness scale. \\
  The points are then assigned a causal order determined by the metric $g$, then we retain only the causal information and forget the original manifold. \\
  Starting from a given causet $\cC$, select some foliation of $\cC$, and let $\Sigma$ be some slice of the foliation. Given any two elements $x,y \in \Sigma$ of $\cC$, one may define a link $x \leftrightarrow y$ whenever $x$ and $y$ have a common immediate successor or predecessor. A similar approach was proposed by  Rideout and  Wallden ~\cite{[RW2]} for the purpose of defining 'spatial distance between two incomparable elements of a causal set. Note, however that this definition is not completely satisfactory due to there being a finite probability for arbitrary far apart points on Minkowski spacetime to have a link. \\
  It might be sensible then to define a sort of  Myrheim-Meyer dimension on each slice of the selected foliation. \\
  Let us present more details: given any causal diamond $\cA :=$ Fut$(p) \cap $Past$(q)$, let $\langle C'_1 \rangle$ be the average number of points in the intersection $\cA \cap \Sigma$, and $\langle C'_2 \rangle$ be the average number of links in $\cA \cap \Sigma$. \\
  We propose that the dimension $d$ of the slice $\Sigma$ should be defined through the following relation:
  \begin{equation} \label{dimension}
   \frac{\langle C'_2 \rangle}{\langle C'_1 \rangle^2} = \frac{ \Gamma(d+1) \Gamma(\frac{d}{2})}{4\Gamma(\frac{3d}{2})}.
  \end{equation}
  If we were to apply the original Myrheim-Meyer definition of dimension of a causal set, we would have used $\langle C_1 \rangle$ (average number of points in $\cA$) and $\langle C_2 \rangle $ (average number of causal relations in $\cA$) instead of $\langle C'_1 \rangle$ and $\langle C'_2 \rangle$.  The dimension $d_\cC$ of $\cC$ is then defined by equation \ref{dimension}, but with the appropriate replacements as just said. \\
  An interesting question is then: for generic causal sets (and generic temporal foliations of these sets), do we have $d_\cC= d+1$?

\section{Conclusion}
  In this paper we have presented a general theorem on temporal foliation of causal sets, and even general posets.   \\

   In the second part of this paper we considered a special class of causal sets, namely infinite causal sets satisfying some (rather strong)
   regularity properties, and we were able to deduce some results concerning automorphisms of causal sets, namely that they fall in two classes:  \\
1. Spatial automorphisms, \\
2. Time translations.  \\

While these conclusions do not hold for general causets, it is interesting to note that 'foliation non-preserving causal automorphisms' might play a similar role in causal set theory to the role of Lorentz boosts in Minkowski space theory. \\

In a forthcoming paper, it will be shown how these results can be extended to the continuous case.

\bibliographystyle{amsplain}

\end{document}